\newtheorem{prop}{Proposition}
\newtheorem{defi}{Definition}
\newtheorem{lemma}{Lemma}
\def\myparagraph#1{\noindent\textbf{#1}. }
\def\defn#1{\textbf{#1}}
\def\next{\mkern1mu\mathord{\mathsf{X}}\mkern1mu}
\def\until{\mkern1mu\mathord{\mathsf{U}}\mkern1mu}
\def\release{\mkern1mu\mathord{\mathsf{R}}\mkern1mu}
\def\imp{\to}
\def\defn#1{\textbf{#1}}
\def\sem#1{\llbracket #1\rrbracket}
\def\sse{\leftrightarrow}
\def\M{\mathfrak M}
\def\N{\mathfrak R}     
\def\Ap{\mathcal{P}}
\def\set#1{\{#1\}}
\def\tuple#1{\langle #1 \rangle}
\def\some{\mathsf{E}\,}
\def\all{\mathsf{A}\,}
\def\defn#1{\textbf{#1}}
\def\rR{{\color{red} R} }
\def\bleq{{\color{blue} P}}
\def\F{\mathfrak{F}}
\title{Reasoning about Intuitionistic \\ Computation Tree Logic}
\author{Davide Catta
\institute{Università di Napoli "Federico II"\\ Naples, Italy}
\and
Vadim Malvone
\institute{Télécom Paris\\ Palaiseau, France}
\and
Aniello Murano
\institute{Università di Napoli "Federico II"\\ Naples, Italy}
}
\begin{document}
\maketitle

\begin{abstract}
 In this paper, we define an intuitionistic version of Computation Tree Logic. After explaining the semantic features of intuitionistic logic, we examine how these characteristics can be interesting for formal verification purposes. Subsequently, we define the syntax and semantics of our intuitionistic version of CTL and study some simple properties of the so obtained logic. We conclude by demonstrating that some fixed-point axioms of CTL are not valid in the intuitionistic version of CTL we have defined.
\end{abstract}

\section{Introduction}

Classical modal and temporal logics are extensions of classical logic, in which some new operators (usually called modalities) qualify the truth of classical formulae. For instance, in a classical or temporal modal logic, one can express that a certain formula  is \emph{necessarily} true, \emph{possibly} true,  that \emph{it will be true} in some future moment of time and so on. In particular, temporal logics are a family of modal logics in which the modalities permit to express, as the name suggests, temporal properties of formulae. Temporal logics originated in the philosophical work of Prior~\cite{Prior1955-PRITAM} in the 50s and were rediscovered and adapted by Pnueli~\cite{Pnueli77} who defined the Linear Temporal Logic (LTL) and showed how interesting program properties could be expressed using temporal logic and, more importantly, automatically verified on mathematical models of the executions of such programs. Time flow in LTL is linear, meaning that each instant of time has exactly one successor. 
In 1981, Edmund M. Clarke and E. Allen Emerson first introduced Computation Tree Logic (CTL)~\cite{EmersonC82}. CTL is a type of branching-time logic, where time is represented as a tree-like structure with an undetermined future. This logic was first used to reason about abstractions of concurrent programs and subsequently became a milestone in the automatic verification of cyber-physical models. 

\paragraph{Intuitionistic Logic.}
Intuitionism is a mathematical school developed in the early 1900s by the Dutch mathematician L.E.J. Brouwer. Intuitionism rejects the idea that the truth value of a mathematical statement is independent of our ability to \emph{know or verify} it. Put differently: an intuitionist believes that the truth conditions of a mathematical statement are its  provability conditions. As a result, intuitionism rejects the validity of the principle of excluded middle. In fact, given any mathematical statement $\varphi$, it is not always possible to have knowledge of (i.e., prove or verify) $\varphi$ or its negation $\neg \varphi$. 
In the 1930s, Heyting developed intuitionistic logic~\cite{Mancosu1997-MANFBT-2}, a logic embodying the underlying principles of intuitionistic reasoning. Intuitionistic logic has found many applications in computer science, e.g., the Curry-Howard isomorphism relating intuitionistic proofs and typed lambda terms~\cite{sorensen} and the constructive type theory of Per Martin-L\"of~\cite{DBLP:books/daglib/0000395}. The semantics of intuitionistic logic was first specified by means of topological spaces, and later, by Saul Kripke in terms of Kripke models~\cite{KRIPKEintu}. This latter semantics is particularly interesting for our means. Kripke's idea is that a model of intuitionistic logic represents the dynamics of an \emph{idealized agent} (or idealized mathematician) that is expanding her knowledge about mathematical statements over time.  In this temporal process, she creates new elements while observing the fundamental facts in her universe. Moving from one moment to the next, she freely decides how to continue her activity, resulting in a partially ordered set of possible stages, known as possible worlds. In this particular interpretation, the truth of an intuitionistic formula $\varphi$ in a given moment of time $w$ depends upon the future of $w$, i.e, upon the moments of time coming after $w$. For instance, to conclude that a formula $\varphi \imp \psi$ is true at a given moment $w$, the agent must be certain that for every future moment $w'$, if there is a proof of $\varphi$ at $w'$, it is always possible to obtain a proof of $\psi$ as well.

\paragraph{Intuitionistic Modal Logic.}
Intuitionistic logic can be extended with modalities in different ways (for an overview see  \cite{simpson:phd}):
while in classical logic axioms involving only $\Box$ provide also description of the behavior of $\Diamond$, for intuitionistic logic this is no more the case since the duality of the two modalities does not hold anymore.
This leads to different approaches.
\emph{Constructive modal logics} consider minimal sets of axioms to guarantee the definition of the behaviors of the $\Box$ and $\Diamond$ modalities.
A second approach, referred to as \emph{intuitionistic modal logic}, considers additional axioms in order to validate the G\"{o}del-Gentzen translation  \cite{das:mar:blog}. This second approach has led to the definition of a class of Kripke models (called birelational models) in which two distinct relations of accessibility are considered: one representing the aforementioned preorder and the other representing the “standard" accessibility relation of a Kripke model. In this paper, for the sake of simplicity, we will follow this second approach. This approach is used, for instance, for the intuitionistic version of LTL studied in~\cite{DBLP:journals/tocl/BalbianiBDF20}.

\paragraph{Intuitionistic Computational Tree Logic.} In this paper, we aim to define an intuitionistic version of the aforementioned Computation Tree Logic. The intuition we seek to formalize is as follows: we distinguish between two different temporal evolutions within a CTL model. One represents the agent's knowledge about the system, while the other represents the possible evolutions of the system itself based on the given knowledge. The agent's goal is to conclusively verify that certain properties hold with respect to the possible evolutions of the model and the potential evolution of its knowledge. In other words, given a property of interest $\varphi$, it wants to ascertain that, in any state of its knowledge regarding the model, and regardless of the evolution of the model itself, $\varphi$ is satisfied. We think this type of intuition allows for considering the verification of CTL properties in the imperfect information context.

\paragraph{Structure of the work} In Section~\ref{ICTL} we provide the syntax and semantic of Intuitionistic Computation Tree Logic. Then, in Section~\ref{properties} we provide some properties of our logic. Finally, we conclude in Section~\ref{conclusions} with some future directions.

\section{Syntax and Semantics of Intuitionistic Computation Tree Logic}\label{ICTL}
In this section we provide the syntax and semantics of our logic. We follow~\cite{DBLP:conf/tark/PlotkinS86} for the definition of models that will be used in the following. 
We fix a countable set $\Ap$ of \emph{atomic propositions} or \emph{atoms}. 

\begin{defi}
Formulae of Intuitionistic Computation Tree Logic (ICTL for short) are defined by the following grammar: 
$$\varphi,\psi := p \mid \bot \mid \varphi \land \psi \mid \varphi \vee \psi \mid \varphi \imp \psi $$
$$ \some \next \varphi \mid \some (\varphi \until \psi) \mid \some (\varphi \release \psi)\mid \all \next \varphi \mid \all (\varphi \until \psi) \mid \all (\varphi \release \psi) $$

\noindent where $p\in \Ap$ and $\bot$ is the absurdity symbol. We define the negation of a formula $\varphi$ as $\neg \varphi \equiv \varphi \imp \bot$ .  Formulae whose first operator is $\some$ are called \emph{existential} formulae, while those whose first operator is $\all$ are called \emph{universal} formulae.  
\end{defi}

Given the syntax of ICTL, we can now provide the definition of birelational frame.

\begin{defi}
    A \defn{birelational frame} $\F$ is a triple $\tuple{W,\bleq, \rR}$ where $W$ is a non-empty countable set of worlds, $\bleq$ is a preorder on $W $  (i.e., a reflexive and transitive relation) and $\rR$ is a binary serial relation on $W$ (i.e., for every $x\in W$ there is a $y\in W$ such that $x \rR y$) in which the following conditions are satisfied, for all $x,y,z\in W$:  
   \begin{description}
    \item[($\mathsf{C}_1$)]  if $x \rR y$ and $y\bleq z$, then there is a $u\in W$ such that $x\bleq u$ and $u \rR z$ (see below left); 
    
    \item[($\mathsf{C}_2$)]  if $x\bleq z$ and $x\rR y$, then there is a $u\in W$ such that $y\bleq u$ and $z \rR u$ (see below right).  
\end{description}

\small 
$$\begin{array}{c@{\qquad\qquad}c@{\qquad\qquad\qquad\qquad}c@{\qquad\qquad}c}
   \vnode{11}{u} & \vnode{12}{z} & \vnode{13}{z} & \vnode{14}{u}
   \\
   \\
   \\
   \\
   \vnode{21}{x} & \vnode{22}{y} &  \vnode{23}{x} & \vnode{24}{y}
   \grafoedges{node21/node22/0, node23/node24/0}
   \orderedges{node22/node12/0,node23/node13/0}
   \orderdedges{node21/node11/0, node24/node14/0}
   \grafodedges{node11/node12/0,node13/node14/0}
\end{array}$$
\normalsize

\end{defi}

A \emph{path} in a frame $\F$ is an infinite sequence of worlds $\rho=\rho_0,\rho_1,\ldots$ such that for any $i\in \mathbb N$ we have that $\rho_i \rR \rho_{i+1}$. If $\rho$ is a path then $\rho_i$ denotes its $(i+1)$-th element, $\rho_{\leq i}$ the finite prefix $\rho_0,\ldots,\rho_i$ of $\rho$ and $\rho_{\geq i}$ the infinite suffix $\rho_i,\rho_{i+1},\ldots$ of $\rho$ starting at $\rho_i$. 

\begin{lemma}\label{lemma:1}
    Let $\F$ be a frame and $w$ and $w'$ two worlds of $\F$ such that $w\bleq w'$. For every path $\rho$ such that $\rho_0=w$ there is a path $\tau$ such that $\tau_0=w'$ for which holds that $\rho_i\bleq \tau_i$ for every natural number $i$.  
\end{lemma}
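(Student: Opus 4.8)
The plan is to construct the path $\tau$ one coordinate at a time by induction on $i$, using condition $(\mathsf{C}_2)$ as the engine that produces each successor while preserving the coordinate-wise order. I would set $\tau_0 = w'$, so that the hypothesis $w \bleq w'$ gives the base case $\rho_0 \bleq \tau_0$ at once. The invariant I would carry through the induction is: after $i$ steps I have a finite sequence $\tau_0,\ldots,\tau_i$ with $\tau_j \rR \tau_{j+1}$ for every $j < i$ and $\rho_j \bleq \tau_j$ for every $j \leq i$.

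For the inductive step, suppose the sequence $\tau_0,\ldots,\tau_i$ has been built. Since $\rho$ is a path we have $\rho_i \rR \rho_{i+1}$, and by the inductive hypothesis $\rho_i \bleq \tau_i$. I would then apply $(\mathsf{C}_2)$ with $x = \rho_i$, $z = \tau_i$ and $y = \rho_{i+1}$: the premises $x \bleq z$ and $x \rR y$ hold, so the condition yields a world $u$ with $\rho_{i+1} \bleq u$ and $\tau_i \rR u$. Setting $\tau_{i+1} = u$ extends the sequence while maintaining both halves of the invariant. Iterating indefinitely produces an infinite sequence $\tau = \tau_0,\tau_1,\ldots$ with $\tau_i \rR \tau_{i+1}$ for all $i$---hence a genuine path---that starts at $w'$ and satisfies $\rho_i \bleq \tau_i$ for every $i \in \mathbb{N}$, which is exactly the claim.

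I do not expect a serious obstacle here: the whole point is recognizing that $(\mathsf{C}_2)$ is precisely the ``forth''/zig property letting one push a single $\rR$-transition forward along $\bleq$, and the lemma is just its transitive closure along a path. The one subtlety worth a remark is that the construction makes an unbounded sequence of dependent choices of the witness $u$, so it tacitly appeals to the axiom of dependent choice; I would flag this but not dwell on it. Note also that $(\mathsf{C}_1)$ plays no role in this direction---it is $(\mathsf{C}_2)$ alone that drives the argument.
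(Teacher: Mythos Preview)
Your proposal is correct and follows essentially the same inductive construction as the paper, with condition $(\mathsf{C}_2)$ supplying each successive $\tau_{i+1}$. The one point of divergence concerns how the choice of witness is made definite: you flag an appeal to dependent choice, whereas the paper exploits the assumption that $W$ is countable by fixing an enumeration $(w_i)_{i\in\mathbb N}$ of the worlds and, at each step, selecting the \emph{least} element in that enumeration satisfying the required conditions. This renders the construction choice-free, which is arguably more in keeping with the intuitionistic spirit of the paper; your version is slightly cleaner to state but carries the dependent-choice caveat you already noted.
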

\begin{proof}
Let $A=(w_i)_{i\in \mathbb{N}}$ be an enumeration of the worlds of $\F$. Given $\rho$, we define $\tau$ by induction on $\mathbb{N}$. We define $\tau_0=w'$ and for any $i\geq 1$ we let $\tau_i$ be the smallest element in $A$ such that $\tau_{i-1} \rR\tau_{i}$ and $\tau_i \bleq \rho_i$. Remark that $\tau_i$ exists because of condition $\mathsf{C}_2$ above: in fact, suppose that for $j<i$ it holds that $\rho_j \bleq \tau_j$, in particular, this means that $\tau_{i-1}\bleq \rho_{i-1}$. Since $\rho_{i-1}\rR \rho_{i}$, then by $\mathsf{C}_2$ there is (at least one and possibly an infinite countable number of)  $u\in W$ such that  $\tau_{i}\rR u$ and $\rho_i \bleq u$.    
\end{proof}

Given the definition of frames, we are able to define our models.

\begin{defi}
    A \defn{birelational model} (model from now on) is a tuple $\M=\tuple{W,\bleq,\rR,\mathcal{V}}$ where $\tuple{W,\bleq,\rR}$ is a birelational frame and $\mathcal{V}: W\to 2^\mathcal{P}$ is a \defn{valuation function} sending each world $w$ to the subset of atomic propositions that are true at $w$ and that is subject to the \defn{monotonicity condition}, that is: if $w\bleq w'$ then $\mathcal{V}(w)\subseteq \mathcal{V}(w')$. 
\end{defi}

Now, we have all the ingredients to define the semantics of ICTL.

\begin{defi}
The satisfaction relation $\M,w\models \varphi$ between a model $\M$, a world $w$ of $\M$, and an ICTL formula $\varphi$ is inductively defined as follows: 

\begin{itemize}
    \item $\M,w\models p$ iff $p\in \mathcal{V}(w)$; 
    \item $\M,w\models \bot$ never; 
    \item $\M,w\models \psi \land \theta$ iff $\M,w\models \psi$ and $\M,w \models \theta$; 
    \item $\M,w \models \psi \vee \theta$ iff $\M,w\models \psi$ or $\M,w\models \theta$; 
    \item $\M,w\models \psi \imp \theta$ iff for every $w'$ such that $w\bleq w'$ we have that $\M,w'\models \psi $ implies $\M,w'\models \theta$; 
    \item $\M,w\models \some \next \psi$ iff  there is a path $\rho$ whose first element $\rho_0$ is $w$ and $\M,\rho_1\models \varphi$; 
    \item $\M,w\models \some (\psi \until \theta)$ iff there is a path $\rho$ whose first element $\rho_0$ is $w$ and there is a $j\geq 0$ such that $\M,\rho_j \models \theta$ and for all $0\leq i < j$ we have that $\M,\rho_i \models  \psi$; 
    \item $\M,w\models \some (\psi \release \theta)$ iff there is a path $\rho$ whose first element $\rho_0$ is $w$ and either $\M,\rho_i \models \theta$ for all $i\in \mathbb{N}$ or there is a $j\geq 0$ such that $\M,\rho_j \models \psi$ and $\M,\rho_i \models \theta$ for all $0\leq i \leq j$; 
     \item $\M,w\models \all \next \psi$ iff for every $w'$ such that $w\bleq w'$ and for every path $\rho$ whose first element $\rho_0$ is $w'$ we have that $\M,\rho_1\models \psi$; 
    \item $\M,w\models \all (\psi \until \theta)$ iff for every $w'$ such that $w\bleq w'$ and for every  path $\rho$ whose first element $\rho_0$ is $w'$ we have that there is a $j\geq 0$ such that $\M,\rho_j \models \theta$ and for all $0 \leq i < j$ we have that $\M,\rho_i \models  \psi$; 
    \item $\M,w\models \all (\psi \release \theta)$ iff for every $w'$ such that $w\bleq w'$ and for every path $\rho$ whose first element $\rho_0$ is $w'$ we have that either $\M,\rho_i \models \theta$ for all $i\in \mathbb{N}$ or there is a $j\geq 0$ such that $\M,\rho_j \models \psi$ and $\M,\rho_i \models \theta$ for all $0\leq i \leq j$. 
    
\end{itemize}
We write $\M,w\not \models \varphi$ when $w$ does not satisfy $\varphi$. We say that a formula $\varphi$ is \defn{valid in a model} $\M$ iff it is satisfied in every $w\in W$. A formula $\varphi$ is \defn{valid in a frame} $\F=\tuple{W,\bleq,\rR}$ iff it is valid in $\tuple{W,\bleq,\rR,\mathcal{V}}$  for any valuation $\mathcal{V}$. A formula $\varphi$ is \defn{valid} iff it is valid in every frame. 
    
\end{defi}

Remark that, by the above definition, a formula $\neg \varphi = \varphi \imp \bot$ is satisfied at $w$ if for any $w'$ such that $w\bleq w'$ we have that $w'$ satisfies $\varphi $ implies $w'$ satisfies $\bot$. Since $\bot$ is \emph{never} satisfied, this is equivalent to say that \emph{no state} $w'$ bigger than $w$ satisfies $\varphi$.  Given the above definition of satisfaction and validity, it is fairly easy to show that the operators are not dual, e.g., we do not have that $\neg \all \next \neg \varphi \imp \some \next \varphi$. 

\section{Main Properties}\label{properties}
In this section, we show that CTL and ICTL differs in some important properties. First, we show that the satisfaction relation is monotonous with respect to the preorder. This properties, that is shared by all intuitionist modal logics, intuitively says that the agent's knowledge can only grow. 
\begin{prop}\label{prop:preservation}
Let $\M$ be a model, $\varphi$ a formula, and $w$ and $w'$ a pair of worlds of $\M$. if $\M,w\models \varphi$ and $w\bleq w'$ then $\M,w' \models \varphi$
    
\end{prop}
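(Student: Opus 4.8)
The plan is to prove the monotonicity (or persistence) property by structural induction on the formula $\varphi$. This is the standard technique for such preservation results in intuitionistic modal logics, and the inductive structure of the ICTL grammar makes it the natural approach. Throughout, I fix a model $\M$ and assume $w \bleq w'$; the induction hypothesis will state that the claim holds for all proper subformulae of $\varphi$, at every pair of worlds related by $\bleq$.

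First I would handle the base cases. For $\varphi = p$ an atom, the result is immediate from the monotonicity condition on the valuation: if $p \in \mathcal{V}(w)$ and $w \bleq w'$ then $\mathcal{V}(w) \subseteq \mathcal{V}(w')$, so $p \in \mathcal{V}(w')$. For $\varphi = \bot$ the implication holds vacuously, since $\M, w \models \bot$ never holds. The propositional connectives $\land$ and $\vee$ follow routinely from the induction hypothesis applied to the immediate subformulae. The case $\varphi = \psi \imp \theta$ is slightly different but easy: its satisfaction at $w$ is already a universal statement over all $\bleq$-successors of $w$, and since $\bleq$ is transitive, every $\bleq$-successor of $w'$ is also a $\bleq$-successor of $w$, so satisfaction transfers from $w$ to $w'$ directly.

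The interesting cases are the temporal operators, and this is where the frame conditions and \Cref{lemma:1} do the real work. For the existential operators $\some$, I expect the argument to be comparatively direct: a witnessing path $\rho$ from $w$ needs to be transported to a path from $w'$, and here \Cref{lemma:1} supplies, for any path $\rho$ starting at $w$, a path $\tau$ starting at $w'$ with $\rho_i \bleq \tau_i$ for every $i$; applying the induction hypothesis pointwise along $\tau$ converts satisfaction of the relevant subformulae along $\rho$ into satisfaction along $\tau$, yielding the required witnessing path from $w'$. For the universal operators $\all$, the satisfaction clause at $w$ already quantifies over all $w''$ with $w \bleq w''$ and all paths from such $w''$; by transitivity of $\bleq$, any $w''$ with $w' \bleq w''$ satisfies $w \bleq w''$, so the universal condition at $w$ immediately covers everything required at $w'$, giving these cases almost for free.

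The main obstacle, and the step deserving the most care, is the existential case: one must check that transporting the witnessing path via \Cref{lemma:1} genuinely preserves the truth of the body of each temporal operator — in particular for $\some(\psi \until \theta)$ and $\some(\psi \release \theta)$, where satisfaction depends on $\psi$ and $\theta$ holding at the appropriate positions along the path. The key point is that \Cref{lemma:1} yields $\rho_i \bleq \tau_i$ at \emph{every} index simultaneously, so the induction hypothesis can be applied at each relevant position $j$ (and each $i < j$ or $i \leq j$) to push the subformula's truth from $\rho$ onto $\tau$, preserving the existential witness for the index $j$ and the interval conditions unchanged. I would make sure the quantifier structure lines up precisely in the release case, where both the infinite alternative and the finite-witness alternative must be transported.
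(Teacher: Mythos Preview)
Your proposal is correct and follows exactly the approach the paper sketches: structural induction on $\varphi$, with Lemma~\ref{lemma:1} supplying the path-lifting needed for the existential temporal cases, and transitivity of $\bleq$ handling $\imp$ and the universal temporal cases. The paper's own proof is a one-line gesture at precisely this argument, so you have simply (and accurately) unpacked it.
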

\begin{proof}
    By induction on the structure of $\varphi$ using Lemma~\ref{lemma:1} when we consider existential formulae. 
\end{proof}

Given a model $\M$ and a formula $\varphi$, we write $\sem{\varphi}^\M$ to denote the set of worlds of $\M$ satisfying $\varphi$, that is $\sem{\varphi}^\M=\set{w\in W \mid \M,w\models \varphi}$. Whenever the model $\M$ is contextually given and no confusion can arise, we omit the superscript $\M$.
If $w$ is a world of $\M$, we denote by $w^\uparrow$ the set of worlds that are greater of $w$ with respect to $\bleq$. 
Given $X\subseteq W$, we let $Pre^\exists (X)=\set{w' \mid w'\rR w \text{ for some } w\in X}$ and $Pre^\forall(X)=\set{w' \mid w'\rR w  \text{ implies } w\in X}$. If $Y\subseteq W$ is a set then $Y^\uparrow$ denotes the set of elements of $w$ whose upward closure is in $Y$, that is $Y^\uparrow=\set{w\in W \mid w^\uparrow \subseteq Y}$. If $Y$ is a set $Y^\mathbf{c}$ denotes its complement.  

\begin{prop}
   Given a model $\M$ and a world $w$, we have that: 

   \begin{enumerate}
   \item $\M,w\models \varphi \imp \psi$ iff $w\in (\sem{\varphi}^{\mathbf{c}} \cup \sem{\psi})^\uparrow$
       \item $\M,w\models \some \next \varphi$ iff $w\in Pre^\exists(\sem{\varphi})$
       \item $\M,w\models \all \next \varphi$ iff $w\in (Pre^\forall(\sem{\varphi}))^\uparrow$
   \end{enumerate}
\end{prop}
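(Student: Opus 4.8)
The plan is to prove each of the three biconditionals by directly unfolding the semantic clause on the left and the set-theoretic expression on the right, and checking that they describe the same condition on $w$. For item~(1), I first recall that, by the definition of the kernel operator, $w \in (\sem{\varphi}^{\mathbf{c}} \cup \sem{\psi})^\uparrow$ means exactly $w^\uparrow \subseteq \sem{\varphi}^{\mathbf{c}} \cup \sem{\psi}$, i.e.\ that every $w'$ with $w \bleq w'$ satisfies $w' \notin \sem{\varphi}$ or $w' \in \sem{\psi}$. Rephrasing ``$w' \notin \sem{\varphi}$ or $w' \in \sem{\psi}$'' as ``$\M,w'\models\varphi$ implies $\M,w'\models\psi$'', this is verbatim the satisfaction clause for $\M,w\models\varphi\imp\psi$, so the two sides coincide with no further work.

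For items~(2) and~(3) the one nontrivial ingredient is the seriality of $\rR$, which lets me pass between paths and single successor steps. Concretely, I would establish the following bridging fact: for any set $X \subseteq W$ and any world $v$, there is a path $\rho$ with $\rho_0 = v$ and $\rho_1 \in X$ if and only if $v \rR u$ for some $u \in X$. The right-to-left direction uses seriality to extend the two-step sequence $v, u$ into an infinite path (each world has some $\rR$-successor, so the sequence can always be prolonged); the left-to-right direction is immediate, taking $u = \rho_1$. Item~(2) then follows at once: $\M,w\models\some\next\varphi$ says there is a path $\rho$ from $w$ with $\rho_1 \in \sem{\varphi}$, and by the bridging fact this is equivalent to $w \rR u$ for some $u \in \sem{\varphi}$, which is exactly $w \in Pre^{\exists}(\sem{\varphi})$.

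For item~(3) I would first unfold the right-hand side: $w \in (Pre^{\forall}(\sem{\varphi}))^\uparrow$ means $w^\uparrow \subseteq Pre^{\forall}(\sem{\varphi})$, i.e.\ for every $w'$ with $w \bleq w'$ and every $u$ with $w' \rR u$ we have $u \in \sem{\varphi}$. The semantic clause for $\all\next$ reads: for every $w'$ with $w \bleq w'$ and every path $\rho$ with $\rho_0 = w'$, $\rho_1 \in \sem{\varphi}$. The two quantifications over $w'$ match, so it remains to show, for a fixed $w'$, that ``every path from $w'$ has its first step in $\sem{\varphi}$'' is equivalent to ``every $\rR$-successor of $w'$ lies in $\sem{\varphi}$'', which is the universal (dual) form of the bridging fact, again using seriality to realize an arbitrary successor $u$ as the $\rho_1$ of some full path. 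I expect this translation between the path-based satisfaction clauses and the single-step $Pre$ operators, and in particular the correct use of seriality to manufacture infinite paths witnessing a chosen first step, to be the only real point requiring care; the rest is bookkeeping with the definitions of $(\cdot)^\uparrow$ and of $Pre^{\exists}, Pre^{\forall}$.
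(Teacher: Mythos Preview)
Your proposal is correct and follows essentially the same approach as the paper, namely unfolding the semantic clauses and the set-theoretic definitions and checking they coincide; the paper only spells out items~(1) and~(3), and in~(1) argues one direction by contraposition rather than directly, but the content is the same. Your explicit isolation of seriality as the bridge between single $\rR$-successors and first steps of infinite paths is a point the paper leaves implicit, so your write-up is if anything slightly more careful on this detail.
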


\begin{proof}
We only prove (1) and (3). 
\begin{enumerate}
    \item For the $(\imp)$-direction we reason by contraposition:  suppose that there is $ w'$ such that $w'\in \sem{\varphi} \cap \sem{\psi}^\mathbf{c}$ and $w\bleq w'$. This means that $\M,w'\models \varphi$ and $\M,w'\not\models \psi $ thus we conclude that $\M,w\not\models \varphi \imp \psi$.  For the converse  direction: suppose that $w\in (\sem{\varphi}^\mathbf{c} \cup \sem{\psi})^\uparrow$. Thus given any $w'$ such that $w\bleq w'$ we have that $w'\in \sem{\varphi}^\mathbf{c}$ or $w'\in \sem{\psi}$. From this fact we deduce that $w'\in \sem{\varphi}$ (that is $w\notin\sem{\varphi}^\mathbf{c}$) implies $w'\in \sem{\psi}$ and we can conclude. 
\end{enumerate}

\begin{enumerate}\setcounter{enumi}{2} 
    \item For the $(\imp)$-direction: suppose that $\M,w \models \all \next \varphi$. By definition, this means that given any $w'$ such that $w\bleq w'$ all successors of $w'$
 are in $\sem{\varphi}$. This proves that $w^\uparrow \subseteq Pre^\forall(\sem{\varphi})$ and thus $w\in (Pre^\forall(\sem{\varphi}))^\uparrow$.  For the converse direction, suppose that $s\in (Pre^\forall(\sem{\varphi}))^\uparrow$. This means given any $w'$ such that $w\bleq w'$ we have that all succesors of $w'$ are in $\sem{\varphi}$. Thus any path starting at the given $w'$ will satisfy $\varphi$ on its second component. We thus deduce that $\M,w\models \all \next \varphi$. 

\end{enumerate} \end{proof}

\begin{prop}\label{1stepunfolding}
Define $\varphi \sse \psi$ as $(\varphi \imp \psi ) \land (\psi \imp \varphi)$. The following formulae are valid:
   \begin{enumerate}
        \item\label{someuntil} $\some (\varphi \until \psi) \sse \psi \vee (\varphi \land  \some \next  \some (\varphi \until \psi))$
       \item\label{somerelease} $\some (\varphi \release \psi) \sse \psi \land (\varphi \vee \some \next \some (\varphi \release \psi))$       
        \item\label{alluntil} $  \psi \vee (\varphi \land  \all \next  \all (\varphi \until \psi)) \imp \all (\varphi \until \psi)$
       \item\label{allrelease} $\psi \land (\varphi \vee \all \next \all (\varphi \release \psi))\imp  \all (\varphi \release \psi) $
         \end{enumerate} 
\end{prop}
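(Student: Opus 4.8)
The plan is to reduce each claimed validity to a pointwise statement about the satisfaction relation and then to establish that statement by elementary manipulation of paths. The key preliminary observation is that, since $\bleq$ is reflexive, a formula $\alpha \imp \beta$ is valid if and only if $\M,w \models \alpha$ entails $\M,w \models \beta$ for every model $\M$ and world $w$: the reflexive instance $w' = w$ gives one direction, and applying this pointwise entailment at each $\bleq$-successor $w'$ gives the other. Similarly $\alpha \sse \beta$ is valid exactly when $\M,w \models \alpha$ and $\M,w \models \beta$ are equivalent at every world. So throughout I fix an arbitrary $\M$ and $w$ and argue these pointwise facts, using seriality of $\rR$ whenever I need a path to start at a prescribed world.

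For the existential equivalences \ref{someuntil} and \ref{somerelease}, the argument is pure path surgery and uses no monotonicity. Left to right, I take a witnessing path $\rho$ from $w$ and split on whether the relevant index $j$ is $0$ (so that $\psi$ already holds at $w$) or $j \geq 1$; in the latter case the suffix $\rho_{\geq 1}$ witnesses the inner $\some(\varphi\until\psi)$ (resp.\ $\some(\varphi\release\psi)$) at $\rho_1$ after shifting the index down by one, while the step $w \rR \rho_1$ coming from $\rho$ realises the outer $\some\next$. Right to left, I perform the inverse operation: from a path $\sigma$ witnessing the inner formula at $\rho_1$ I build $\pi$ with $\pi_0 = w$ and $\pi_{m+1} = \sigma_m$, check that $w \rR \sigma_0$ makes $\pi$ a genuine path, and shift $\sigma$'s index up by one to obtain a witness at $w$. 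The release cases run in parallel, carrying both alternatives (``$\psi$ holds forever'' and ``$\psi$ holds up to a point where $\varphi$ holds'') through the same suffix/prefix construction.

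For the universal implications \ref{alluntil} and \ref{allrelease}, only one direction is claimed, and here Proposition~\ref{prop:preservation} (monotonicity) is indispensable. Assuming the antecedent at $w$, I fix an arbitrary $w'$ with $w \bleq w'$ and an arbitrary path $\rho$ from $w'$, and build the required until/release witness along $\rho$. Monotonicity transports $\M,w\models\varphi$ and $\M,w\models\psi$ up to $w' = \rho_0$, settling index $0$. For the tail I instantiate the hypothesis $\M,w\models\all\next\all(\cdots)$ at the pair $(w',\rho)$ --- admissible because $w\bleq w'$ and $\rho_0 = w'$ --- to get $\M,\rho_1\models\all(\cdots)$, and then instantiate this inner universal formula at $(\rho_1,\rho_{\geq 1})$ by reflexivity of $\bleq$. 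A single index shift glues the resulting tail witness onto index $0$. The disjunctive cases of the antecedent, where $\psi$ (resp.\ $\varphi$) already holds at $w$, are handled directly by monotonicity.

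The step I expect to require the most care is the bookkeeping in \ref{alluntil}--\ref{allrelease}: the nested universal quantifications over $\bleq$ must be instantiated in the right order --- first feeding the given $w'$ and $\rho$ into $\all\next$, then re-instantiating the emerging inner $\all$-formula at its own world via reflexivity --- without dropping the path $\rho$, and the off-by-one index shifts must be tracked precisely. It is precisely the strength of this nested quantification over $\bleq$-successors that defeats the converse implications, which is why the universal fixed-point laws are asserted only in one direction.
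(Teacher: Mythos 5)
Your proof is correct and takes essentially the same approach as the paper's: the existential equivalences are established by the same suffix/prefix path surgery, and the universal implications by monotonicity (Proposition~\ref{prop:preservation}) together with instantiating the nested $\bleq$- and path-quantifiers exactly as the paper does. The only (harmless) difference is bookkeeping: you discharge the outer $\imp$ by reducing validity to a pointwise entailment via reflexivity of $\bleq$, whereas the paper proves the consequent at $w$ and then lifts it to each $w'$ with $w \bleq w'$ using Proposition~\ref{prop:preservation}.
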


\begin{proof}
    We prove~(\ref{somerelease}) and~(\ref{alluntil}). Let $\M$ be any model and $w$ any of its worlds. 

    \begin{enumerate}\setcounter{enumi}{1} 
  
  \item For the $(\imp)$-direction, suppose that $\M,w\models \some (\varphi \until \psi)$ and let $w'$ be a world such that $w\bleq w'$. We must check that $w'\models \psi \vee (\varphi \land \some \next \some (\varphi \until \psi))$. From the fact that $w$ satisfies $\some (\varphi \until \psi)$, we deduce that either $w\in \sem{\psi}$ (in this case we conclude by Proposition~\ref{prop:preservation}), or that $w\in \sem{\varphi}$ and there is a path $\rho$ such that $\rho_{0}= w$ , $\rho_{i} \in \sem{\psi}$ for some $i\geq 1$ and $\rho_j\in \sem{\varphi}$ for all $1\leq j <i $, we thus conclude that $w\in \sem{\psi \vee ( \varphi \land  \some \next \some (\varphi \until \psi)}$ and, again by Proposition~\ref{prop:preservation}, that $w'\in \sem{\psi \vee (\varphi \land (\some \next \some (\varphi \until \psi))}$. 
  For the converse direction: suppose that $w\in \sem{\psi \vee ( \varphi\land \some \next \some (\varphi \until \psi)}$ and let $w'$ be a world such that $w\bleq w'$. Since $w\in \sem{\varphi \land (\psi \vee \some \next \some (\varphi \until \psi))}$ either $w\in \sem{\psi} $ or $w\in \sem{\varphi \land \some\next \some (\varphi \until \psi)}$. In both cases, we deduce that $w\in \sem{\some\varphi \until \psi}$ and we conclude using Proposition~\ref{prop:preservation}.

\end{enumerate}

\begin{enumerate}\setcounter{enumi}{2} 
    \item Suppose that $\M,w\models \psi \vee (\varphi \land \all \next \all (\varphi \until \psi))$ and let $w'$ be a world such that $w\bleq w'$. We must show that $w'\in \sem{\all (\varphi \until \psi)}$. If $w\in \sem{\psi}$ we conclude using Proposition~\ref{prop:preservation} since any path starting at any world bigger than $w$ will immediately satisfy $\psi$ (and thus $(\varphi \until \psi)$). Otherwise,  $w \in \sem{\varphi \land \all \next \all (\varphi \until \psi) )}$ this means that $w\in \sem \varphi$ and given any world $w'$ bigger than $w$, we will have that $w'\rR u$ implies $u\in \sem{\all (\varphi \until \psi)}$. Since $w\bleq w'$,   Proposition~\ref{prop:preservation} allows us to  conclude that $w'\in \sem{\varphi}$ and since given any $u$ such that $w' \rR u$ we have that $u\in \sem{\all (\varphi \until \psi)}$, we conclude that $w'\in \sem{\all (\varphi \until \psi)}$ as we wanted.

\end{enumerate}

\end{proof}

Note that, unlike CTL, in ICTL in Proposition~\ref{1stepunfolding}.\ref{alluntil} and \ref{1stepunfolding}.\ref{allrelease} we have an implication. So, to prove that the other directions do not hold, we provide a counterexample in the following proposition.

\begin{prop}
The two formulae below are \textbf{not valid}

\begin{enumerate}
    \item\label{notalluntil}  $\all( p\until q ) \imp q \vee (p \land (\all \next \all (p \until q)) $
    \item\label{notallrelase}  $\all( q\release p ) \imp p\land (q \vee (\all \next \all (q \until p)) $
\end{enumerate}

\end{prop}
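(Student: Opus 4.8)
The plan is to establish non-validity by exhibiting, for each of the two formulae, a single birelational model $\M$ together with a world $w$ at which the implication fails. Recall that Proposition~\ref{1stepunfolding}.\ref{alluntil} and Proposition~\ref{1stepunfolding}.\ref{allrelease} already give the converse implications, so only the directions written here are in doubt. Since $\M,w\models A\imp B$ quantifies over all $\bleq$-successors of $w$, to refute it I only need a world at which the antecedent holds and the consequent fails, witnessing the failure with the reflexive successor $w$ itself. So the whole task reduces to building a small finite frame satisfying $\mathsf{C}_1,\mathsf{C}_2$ together with a monotone valuation.

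The mechanism I would exploit is the following. If $\M,w\models\all(p\until q)$ and $\M,w\not\models q$, then reading the clause at $w$ itself forces $\M,w\models p$ (the first $q$ on any path cannot be at position $0$, so $p$ must hold there); the analogous remark holds for $\all(q\release p)$. Hence in both consequents the $p$-conjunct is automatically present, and the only way to break the consequent is to refute the nested formula $\all\next\all(\cdots)$ while keeping $q$ false at $w$. Thus I aim for a world $w$ that satisfies the antecedent, has $q$ false, and possesses an $\rR$-successor $s$ with $s\not\models\all(p\until q)$ (respectively $s\not\models\all(q\release p)$). The intuitionistic subtlety that makes this possible is that $\all(p\until q)$ at $s$ must survive passage to \emph{every} $\bleq$-larger world $v\geq s$, and such a $v$ may carry a path (one that forever avoids $q$, or that reaches $\neg p$ before any $q$) that is simply invisible along the paths issuing from $w$.

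Concretely I would take the unique path out of $w$ to be $w\rR s\rR t$ with $t$ self-looping, and place a strictly larger world $v$ above $s$ (so $s\bleq v$) equipped with a successor $v'$ where $v'\models\neg p\wedge\neg q$ and $v'\rR v'$; then the path $v,v',v',\dots$ refutes $p\until q$ (and $q\release p$) at $v$, hence at $s$. The frame conditions then \emph{force} two further worlds: by $\mathsf{C}_1$ applied to $w\rR s$ and $s\bleq v$ a world $u$ with $w\bleq u$ and $u\rR v$, and by $\mathsf{C}_2$ applied to $s\bleq v$ and $s\rR t$ a world $u'\geq t$ with $v\rR u'$. The point is to set $\mathcal{V}(w)=\mathcal{V}(s)=\mathcal{V}(v)=\{p\}$, $\mathcal{V}(u)=\{p,q\}$, $\mathcal{V}(v')=\emptyset$, and put $q$ (for the until case, $p$ for the release case) on $t$ and $u'$; then monotonicity along the three comparabilities $w\bleq u$, $s\bleq v$, $t\bleq u'$ holds, and $\mathsf{C}_1,\mathsf{C}_2$ can be checked edge by edge. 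With this valuation one verifies $\M,w\models\all(p\until q)$ because the only path from $w$ is $w,s,t,\dots$ which meets $q$ at $t$, while every path out of $u$ is released by $q$ at $u$; simultaneously $\M,w\not\models q$ and $s\not\models\all(p\until q)$, so the consequent $q\vee(p\wedge\all\next\all(p\until q))$ fails at $w$. The release formula is obtained from the same frame by taking $\mathcal{V}(t)=\mathcal{V}(v)=\{p\}$, so that $v,v',\dots$ refutes $q\release p$, while $p$ holding throughout $w,s,t,\dots$ and $q$ holding at $u$ keep $\all(q\release p)$ true at $w$.

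The main obstacle is not the semantic evaluation, which is short once the model is fixed, but confirming that the upward closure $w^\uparrow$ does \emph{not} contain the bad world $v$. One must check that the edges forced by $\mathsf{C}_1$ and $\mathsf{C}_2$ never inadvertently place $v$ (or $v'$) $\bleq$-above $w$, and that closing the frame under these two conditions terminates at the finite world set above. Keeping $\bleq$ minimal (exactly the three designed or forced comparabilities plus reflexivity) and making the forced predecessor $u$ satisfy $q$ are precisely what quarantine the counterexample: $v$ is $\rR$-reachable from $w$ only through $u$, yet never $\bleq$-above $w$, so its failing path never endangers $\all(p\until q)$ at $w$. Verifying $\mathsf{C}_1$ and $\mathsf{C}_2$ on every edge is the one laborious step, and it is where a slip would most plausibly hide.
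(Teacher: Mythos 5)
Your proposal is correct and follows essentially the same route as the paper: both prove non-validity by exhibiting an explicit finite birelational countermodel in which the antecedent holds at a root world while the nested $\all \next \all(\cdot)$ fails because a $\bleq$-larger world carries a path that escapes the until/release condition. The paper's model is just more economical — four worlds and a single valuation serving both formulae, with the bad path hanging off a $\bleq$-successor of the root itself rather than of its $\rR$-successor as in your seven-world construction — but the mechanism exploited (the universal $\bleq$-quantification hidden in $\all$) is identical.
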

\begin{proof}
   For both formulae, consider the model $\M$ depicted below in which the preorder $\bleq$  is represented by the blue arrows, the relation $\rR$ is represented by the red arrows and the valuation function is specified next to each node.
\vspace{0.5cm}
\small
$$\begin{array}{cc@{\qquad\quad}c@{\qquad\quad}cc}
    \vlabel{21}{\set{p,q}} &  \vnode{21}{v_1}  & &\vnode{22}{v_2} & \vlabel{22}{\emptyset}  
      \\
      \\
      \\
      \\
   \vlabel{11}{\set{p}} &   \vnode{11}{w_1} && \vnode{12}{w_2} & \vlabel{12}{\set{q}}
      \orderedges{node11/node21/0,node12/node21/0}
      \specgrafoedge{node12}{node12}{out=90, in=40,loop} 	
       \specgrafoedge{node22}{node22}{out=90, in=40,loop} 
       \specgrafoedge{node21}{node21}{out=90, in=40,loop} 
       \grafoedges{node11/node12/0,node21/node22/0}
     
\end{array}$$
\normalsize
\noindent we have that $\M,w_1 \models\all (p\until q)$ but neither  $\M,w_1 \models q$ nor $\M,w_1 \models p \land \all \next \all (p\until q)$. In particular $w_1$ does not satisfy $\all \next \all (p \until q)$ because  $ w_1 \bleq v_1$ and  given the path $\tau=v_1\cdot v_2^\omega$ we have that there is no $i\geq 1$ such that $\M,\tau_j \models q$. 
 Similarly, $\M,w_1 \models \all (q \release p)$ but $w_1$ does not satisfy neither $p\land q$ nor $p\land \all\next \all (q \release p)$.
\end{proof}

\section{Conclusions and Future Works}\label{conclusions}

We have sketched an Intuitionistic variant of CTL (ICTL) and proved some basic properties about this logic. There is still a lot, practically everything, to be done about this logic. Below, we outline some directions we would like to explore, without any specific hierarchical order.

\myparagraph{Formal verification} 
In addition to its purely theoretical interest, the study of CTL has been fundamental for the development of applications in formal software verification. This is because Kripke models on which this logic is interpreted are particularly suitable for modeling the evolution of reactive systems. In order to make ICTL appealing, we would like to identify a class of reactive systems, or specific problems within these systems, that lend themselves well to being modeled using birelationals models.

\myparagraph{Model Checking} The model checking problem for ICTL is the same as the one of CTL: given a (finite) birelational model $\M$, a formula $\varphi$, and a world $w$ can we decide whether $\M,w\models \varphi$?  This problem is P-space hard. For instance, given a CTL model $\N$ one can see it as a birelational model by setting $u\bleq u'$ iff $u=u'$ for any world of $\N$, and thus reduce the model-checking problem for CTL to the one of ICTL. Furthermore, even though the CTL fix points axioms do not hold in ICTL, we think we can characterize the semantics of the release and until operators by the upward-closure of the usual (classical) fix-point  characterization. 

\myparagraph{Axiomatization \& fix-points}  The fix-point axioms are not valid in the intuitionistic variant of CTL that we have defined. It would be interesting to find out whether an axiomatization of ICTL can be obtained by other means. Possibly, one could think of adding another condition on birelational models in order to validate the axioms. For instance the following, for all $x,y,z\in W$: if $x\bleq y$ and $y\rR z$ then there exist an $u\in W$ such that $x\rR u$ and $u\bleq z$. Diagrammatically this gives: 
\small 
$$\begin{array}{c@{\qquad\qquad}c}
   \vnode{11}{y} & \vnode{12}{z} 
   \\
   \\
   \\
   \\
   \vnode{21}{x} & \vnode{22}{u} 
   \grafoedges{node11/node12/0}
   \orderedges{node21/node11/0}
   \orderdedges{node22/node12/0}
   \grafodedges{node21/node22/0}
\end{array}$$
\normalsize

\myparagraph{Multiagent Systems} It would be natural to extend our intutionistic intepretation of CTL to  Alternating-time Temporal Logic (ATL)~\cite{DBLP:journals/jacm/AlurHK02}. A natural interpretation of Intuitionistic ATL would be to consider that agents try to verify a temporal formula by executing some coordinate action on the base of a \emph{shared knowledge} at a given time. We think that this interpretation would have much in common with the epistemic interpretation of ATL~\cite{DBLP:journals/sLogica/HoekW03a}.

\bibliographystyle{eptcs}
\bibliography{generic}
\end{document}